\documentclass[11pt,a4paper]{article}

\usepackage[utf8]{inputenc}
\usepackage[T1]{fontenc}
\usepackage{lmodern}
\usepackage[margin=1in]{geometry}
\usepackage{amsmath,amssymb,amsthm}
\usepackage{graphicx}
\usepackage{booktabs}
\usepackage{hyperref}
\usepackage{cleveref}
\usepackage{xcolor}
\usepackage{algorithm}
\usepackage{algpseudocode}
\usepackage{enumitem}
\usepackage{tikz}
\usetikzlibrary{shapes.geometric,arrows.meta,positioning,fit,calc,
  decorations.pathreplacing,patterns}
\usepackage{pgfplots}
\pgfplotsset{compat=1.18}
\usepgfplotslibrary{fillbetween}
\usepackage{caption}
\usepackage{subcaption}
\usepackage{multirow}
\usepackage{makecell}
\usepackage{mathtools}
\usepackage{natbib}

\hypersetup{
  colorlinks=true,
  linkcolor=blue!70!black,
  citecolor=green!50!black,
  urlcolor=blue!60!black,
}

\newcommand{\Bshort}{B_{\text{short}}}
\newcommand{\Ps}{\mathcal{P}_s}
\newcommand{\Pl}{\mathcal{P}_l}
\newcommand{\Cmax}{C_{\max}}
\newcommand{\Ltotal}{L_{\text{total}}}
\newcommand{\Lout}{L_{\text{out}}}
\newcommand{\sys}{\textsc{FleetOpt}}
\newcommand{\car}{\textsc{C\&R}}
\newcommand{\costcliff}{cost~cliff}

\newtheorem{theorem}{Theorem}

\newtheorem{proposition}[theorem]{Proposition}

\title{%
  \sys{}: Analytical Fleet Provisioning for LLM Inference\\
  with Compress-and-Route as Implementation Mechanism%
}
\author{%
  Huamin Chen$^{1}$ \quad
  Xunzhuo Liu$^{1}$ \quad
  Yuhan Liu$^{2}$ \\[4pt]
  Junchen Jiang$^{3}$ \quad
  Bowei He$^{4}$\thanks{Corresponding author: \texttt{Bowei.He@mbzuai.ac.ae}} \quad
  Xue Liu$^{4}$
  \\[6pt]
  $^{1}$vLLM Semantic Router Project \\[1pt]
  $^{2}$University of Chicago \quad
  $^{3}$Tensormesh Inc / UChicago \\[1pt]
  $^{4}$MBZUAI / McGill University
}
\date{2026}

\begin{document}
\maketitle

\begin{abstract}
Modern LLM GPU fleets are provisioned for worst-case context lengths
that the vast majority of requests never approach, wasting GPU capacity
on idle KV-cache slots.  We present \sys{}, a framework that starts
from first principles: given a workload's prompt-length CDF and a P99
TTFT target, derive the minimum-cost fleet analytically, then deploy
it in practice.

The analytical core models each pool as an M/G/$c$ queue and derives
that the minimum-cost fleet is a two-pool architecture ---
a short-context pool $\Ps$ and a long-context pool $\Pl$ --- with an
optimal boundary $\Bshort^*$ satisfying an equal marginal GPU cost
condition across both pools.
The fundamental barrier to achieving $\Bshort^*$ is the
\emph{\costcliff{}}: a hard routing step where requests just above
$\Bshort^*$ consume $8\times$--$42\times$ more GPU capacity than
requests just below it (depending on the context window ratio),
creating a structural disincentive to lower the boundary.

Compress-and-Route (\car{}) is the implementation mechanism that
resolves this barrier.  Gateway-layer extractive compression trims
borderline requests below $\Bshort^*$ before the engine ever sees them,
converting the hard hardware boundary into a software parameter read
from the workload CDF.  The two components are unified in the
\sys{} offline planner: given a CDF and SLO, it returns the optimal
$(n_s^*, n_l^*, \Bshort^*, \gamma^*)$ in under 1\,ms.

On three production traces, the combined framework reduces total GPU
cost by 6--82\% versus a homogeneous fleet, with \car{} contributing
1--44 percentage points beyond plain pool routing depending on
workload archetype.  The analytical model is validated against a
discrete-event simulator (\texttt{inference-fleet-sim}) with $\leq
3\%$ error on predicted GPU utilization across all pools and
workloads.
\end{abstract}

\section{Introduction}
\label{sec:intro}

Modern LLMs support context windows of 128K tokens or more, yet
production traces reveal a persistent mismatch: in the Azure LLM
Inference Trace, 80\% of requests use fewer than 2K
tokens~\citep{patel2024splitwise,zhang2024pecsched,agrawal2024medha}.
A homogeneous fleet provisioned for worst-case context length allocates
KV-cache capacity that goes almost entirely unused for the vast majority
of requests.  Pool routing~\citep{chen2026poolrouting} addresses this
by splitting the fleet into short- and long-context pools, cutting
GPU cost by 16--38\% in scenarios studied by prior work.

\paragraph{The residual problem: the cost cliff.}
Even with pool routing, a structural inefficiency remains at the pool
boundary $\Bshort$.  The routing decision is binary: a request at
$\Ltotal = \Bshort$ fills one of $n_{\max}^{(s)}$ short-pool slots;
a request one token longer must enter the long pool, which offers only
$n_{\max}^{(l)} = 16$ slots per GPU --- a throughput-capacity penalty
of $8\times$--$42\times$ for a single token (Section~\ref{sec:costcliff}).
Requests in the \emph{borderline band} $(\Bshort,\, \gamma\Bshort]$ are
not genuinely long; they are RAG payloads with one extra paragraph, or
multi-turn sessions whose history just crossed the threshold.
For the workloads we study, 4.6--11.2\% of all traffic falls in this
band.  Each borderline request is assigned a long-pool slot provisioned
for 64K tokens while using at most $1/\rho$ of its KV budget.

\paragraph{Our approach: analytical model first, then implement.}
Prior work treats pool routing and Compress-and-Route (\car{}) as
independent operational interventions: deploy pool routing, then
optionally retrofit \car{} on the existing fleet.  This framing
leaves value on the table.

We take a different path.  We \emph{start from the analytical
optimum}: given a workload CDF and a P99 TTFT target, what is the
minimum-cost fleet?  The M/G/$c$ queuing model answers this precisely:
a two-pool architecture with a specific boundary $\Bshort^*$.
But the \costcliff{} is a barrier to achieving $\Bshort^*$ --- lowering
the boundary forces borderline requests to the expensive long pool.

\car{} is the \emph{implementation mechanism} that resolves this
barrier.  By compressing borderline requests below $\Bshort^*$ at the
gateway, \car{} makes the analytically-optimal boundary achievable in
practice.  The combined framework does not merely retrofit compression
onto an existing fleet; it provisions the correct fleet from the start,
using \car{} to enforce the boundary the model prescribes.

\paragraph{Key results.}
\begin{enumerate}[nosep]
  \item \textbf{The minimum-cost fleet is analytically derived.}
    Under the M/G/$c$ model (with server count $c = n_{\text{gpus}}
    \times n_{\max}$ KV slots), the optimal two-pool fleet satisfies
    an equal marginal GPU cost condition across both pools.  GPU counts
    come from Erlang-C inversion at each pool's arrival rate and
    service distribution (Section~\ref{sec:model}).

  \item \textbf{The \costcliff{} is a barrier to achieving $\Bshort^*$
    without compression.}
    At the optimal boundary, the borderline band contains 43--76\% of
    above-threshold traffic for real workloads (4.6--11.2\% of all
    traffic).  Without compression, the operator must either raise
    $\Bshort$ (losing savings) or over-provision the long pool
    (paying more) to absorb borderline load.

  \item \textbf{\car{} converts the hard boundary into a software knob.}
    Gateway-layer extractive compression trims borderline prompts in
    2--7\,ms.  The compressed token budget is chosen so KV overflow is
    impossible by construction (Section~\ref{sec:car}).
    The effective routing boundary shifts from $\Bshort$ to $\gamma\Bshort$,
    where $\gamma^*$ comes from the planner sweep.

  \item \textbf{Combined savings: 6--82\% vs.\ homogeneous.}
    On three production traces, the \sys{} framework achieves
    6.7--82.4\% GPU cost reduction versus homogeneous deployment.
    \car{} contributes an incremental 1.2\,pp (Agent-heavy),
    15.9\,pp (LMSYS), and 43.7\,pp (Azure) beyond plain pool routing
    (Section~\ref{sec:eval}).

  \item \textbf{Analytical model validated by DES.}
    \texttt{inference-fleet-sim}~\citep{chen2026fleetsim} confirms
    analytical GPU utilization predictions match simulation within
    3\% across all workloads and pools, operating in the many-server
    regime where queueing delays are negligible.
\end{enumerate}

\paragraph{Paper organization.}
Section~\ref{sec:background} characterizes the cost cliff and workload
archetypes.  Section~\ref{sec:model} develops the analytical fleet
model.  Section~\ref{sec:optimal} derives optimal fleet sizing and the
optimal boundary.  Section~\ref{sec:car} presents \car{} as the
implementation mechanism.  Section~\ref{sec:algorithm} gives the
\sys{} offline planner.  Section~\ref{sec:eval} evaluates on
production traces.  Sections~\ref{sec:related} and~\ref{sec:conclusion}
discuss related work and conclude.

\section{Background: The Cost Cliff and Workload Archetypes}
\label{sec:background}

\subsection{Pool Routing Basics}
\label{sec:bg-routing}

A request $r$ is assigned a total token budget
$\Ltotal = \lceil|r|/\hat{c}_k\rceil + r.\texttt{max\_output\_tokens}$,
where $\hat{c}_k$ is a per-category bytes-per-token EMA estimate.
Pool routing routes $r$ to $\Ps$ if $\Ltotal \leq \Bshort$, else $\Pl$.
The short pool is sized for $\Cmax^{(s)}$ tokens (e.g., 8K) with
$n_{\max}^{(s)}$ concurrent sequences per GPU; the long pool for
$\Cmax^{(l)}$ tokens (e.g., 64K) with $n_{\max}^{(l)}$.

\subsection{KV-Cache Memory and the Cost Cliff}
\label{sec:costcliff}

For Llama-3-70B in fp16, the KV cache grows at \textbf{320\,KB per
token} across 80 layers.  A long-pool slot sized for 64K tokens requires
$64{,}000 \times 320\,\text{KB} \approx 20.5$\,GB of GPU memory; the
same hardware configured for 8K can host $\rho = 8\times$ as many
concurrent sequences, delivering $\rho = 8\times$ the throughput.
The GPU savings formula for pool routing is
$\alpha(1 - 1/\rho)$~\citep{chen2026poolrouting}, where $\alpha$ is
the fraction of traffic routed to $\Ps$.

The \emph{\costcliff{}} is the discontinuity at $\Bshort$: a request
at $\Bshort + 1$ tokens must enter the long pool, which hosts only
$n_{\max}^{(l)} = 16$ concurrent slots per GPU (sized for 64K tokens),
versus $n_{\max}^{(s)}$ short-pool slots.  The cliff ratio
$\rho = n_{\max}^{(s)} / n_{\max}^{(l)}$ depends on the short-pool
context window: $\rho = 8\times$ at $\Bshort = 8{,}192$,
$\rho = 16\times$ at $\Bshort = 4{,}096$, and $\rho = 42\times$
at $\Bshort = 1{,}536$ (Table~\ref{tab:borderline-fractions}).
A borderline request at $1.1 \times \Bshort$ uses only $1/\rho$ of
the long-pool KV budget it was allocated.  This is not a flaw of pool
routing; it is the unavoidable consequence of provisioning slots for
worst-case context length.  The only escape is to keep requests below
$\Bshort$.

\begin{table}[htbp]
\centering
\caption{The cost cliff: throughput capacity consumed per request
  at and around $\Bshort = 8{,}192$ tokens for Llama-3-70B on A100-80GB
  ($n_{\max}^{(s)} = 128$, $n_{\max}^{(l)} = 16$, KV/token = 320\,KB,
  long-pool sized for 64K tokens $\approx 20.0$\,GB per slot).
  The cliff ratio is $\rho = n_{\max}^{(s)}/n_{\max}^{(l)} = 8\times$.}
\label{tab:cost-cliff}
\small
\begin{tabular}{lrrrr}
\toprule
$\Ltotal$ (tokens) & Pool & Slots/GPU & KV utilised & Cost ratio \\
\midrule
8{,}192   & $\Ps$ & 128 & 100\% (2.5\,GB / slot) & 1.0$\times$ \\
8{,}193   & $\Pl$ &  16 & 12.5\% of 20.0\,GB     & 8.0$\times$ \\
12{,}000  & $\Pl$ &  16 & 18.3\% of 20.0\,GB     & 8.0$\times$ \\
65{,}536  & $\Pl$ &  16 & 100\% (20.0\,GB / slot) & 8.0$\times$ \\
\bottomrule
\end{tabular}
\end{table}

\subsection{The Borderline Band}
\label{sec:borderline}

The \emph{borderline band} $(\Bshort,\, \gamma\Bshort]$ contains
requests whose prompts exceed $\Bshort$ only slightly and could be
compressed below $\Bshort$ at bandwidth ratio $\gamma$.  Let
$\alpha = F(\Bshort)$ (the fraction of requests already in $\Ps$) and
$\beta = F(\gamma\Bshort) - F(\Bshort)$ (the borderline fraction).
Table~\ref{tab:borderline-fractions} shows $\beta$ across workloads.

\begin{table}[htbp]
\centering
\caption{Borderline fraction $\beta = F(\gamma\Bshort) - F(\Bshort)$
  at representative thresholds.
  $\alpha = F(\Bshort)$; cliff $= n_{\max}^{(s)}/n_{\max}^{(l)}$ for
  $\Cmax^{(l)} = 65{,}536$; Agent-heavy is a projected workload
  (Section~\ref{sec:eval-setup}).}
\label{tab:borderline-fractions}
\small
\begin{tabular}{llrrrrr}
\toprule
Workload & $\Bshort$ & $\alpha$ & $\gamma$ & $\beta$ & Cliff $\rho$ & Archetype \\
\midrule
Azure (2023)        & 4{,}096 & 0.898 & 1.5 & 0.078 & 16$\times$ & I/II \\
LMSYS (multi-turn)  & 1{,}536 & 0.909 & 1.5 & 0.046 & 42$\times$ & I/II \\
Agent-heavy         & 8{,}192 & 0.740 & 1.5 & 0.112 &  8$\times$ & II \\
\bottomrule
\end{tabular}
\end{table}

\subsection{Workload CDF Archetypes}
\label{sec:archetypes}

Three qualitatively different workload shapes determine how the cost
cliff manifests and which remediation is appropriate.

\textbf{Archetype~I (Concentrated-below).}  The CDF has a sharp knee
below $\Bshort$: $F(\Bshort) \geq 0.90$ and the density $f(\Bshort)$
is high.  The borderline fraction $\beta$ is moderate in absolute
terms (4.6--7.8\%), but the fraction-of-above-threshold traffic that
is borderline is large (51--76\%).  Pool routing already captures most
savings; \car{} provides meaningful additional savings because the
cliff ratio $\rho$ is large ($16\times$--$42\times$).

\textbf{Archetype~II (Dispersed).}  The CDF spreads across two or more
decades of token counts.  Significant borderline traffic exists
($\beta = 7$--$12\%$), and \car{} provides meaningful incremental
savings.  The agent-heavy trace (RAG + tool-use + code) is the primary
representative.

\textbf{Archetype~III (Concentrated-above).}  The mass of the
distribution lies above $\Bshort$ (e.g., code-agent tasks at 10--50K
tokens).  The borderline fraction is negligible; the dominant lever is
raising $\Bshort$ before any compression.

\section{Analytical Fleet Model}
\label{sec:model}

\subsection{Queueing Model}
\label{sec:model-queue}

We model each pool as an M/G/$c$ queue.  Let $\lambda$ be the total
fleet arrival rate.  After routing:
\begin{align}
  \lambda_s &= \alpha'\,\lambda,
  \quad\text{where }\alpha' = \alpha + \beta p_c \label{eq:lambda-s}\\
  \lambda_l &= (1-\alpha')\,\lambda \label{eq:lambda-l}
\end{align}
Here $\alpha = F(\Bshort)$, $\beta = F(\gamma\Bshort) - F(\Bshort)$
is the borderline fraction, and $p_c \in [0,1]$ is the compressibility
rate (fraction of borderline requests successfully compressed).

\paragraph{Service time model.}
GPU iteration latency under continuous batching is
\begin{equation}
  t_{\text{iter}} = W + H \cdot n_{\text{slots}},
  \label{eq:iter-latency}
\end{equation}
where $W = 8$\,ms (baseline compute for Llama-3-70B/A100) and
$H = 0.65$\,ms/slot (per-slot memory-bandwidth cost) are
hardware constants calibrated to the A100-80GB.
Under continuous batching, all $n_{\max}$ slots advance in lockstep
each iteration.  A request with $L_{\text{in}}$ input tokens and
$L_{\text{out}}$ output tokens occupies a slot for
\begin{equation}
  \mathbb{E}[S] = \bigl(\lceil L_{\text{in}} / C_{\text{chunk}}\rceil
    + L_{\text{out}}\bigr) \cdot t_{\text{iter}},
  \label{eq:service-time}
\end{equation}
wall-clock seconds (chunk size $C_{\text{chunk}} = 512$).
The GPU-level throughput is $\mu_{\text{gpu}} = n_{\max} /
\mathbb{E}[S]$ requests per second; the squared coefficient of
variation is $C_s^2 = \mathrm{Var}[S] / (\mathbb{E}[S])^2$,
estimated by Monte Carlo sampling from the pool's request distribution.

\paragraph{M/G/$c$ tail-wait approximation.}
We model a pool with $n$ GPUs as an M/G/$c$ queue with
$c = n \cdot n_{\max}$ total KV slots as servers, each with
per-slot service rate $\mu = 1/\mathbb{E}[S]$.  The offered load
is $\varrho = \lambda_p / (c\,\mu) < 1$.  The Erlang-C probability
(probability a new request must wait for a slot) is
\begin{equation}
  C(c, \varrho) =
    \frac{(c\varrho)^c / (c!\,(1-\varrho))}{%
      \sum_{k=0}^{c-1}(c\varrho)^k/k! +
      (c\varrho)^c/(c!\,(1-\varrho))},
  \label{eq:erlang-c}
\end{equation}
The Kimura~\citep{kimura1994mgc} M/G/$c$ approximation gives the
P99 queue waiting time:
\begin{equation}
  W_{99}(c, \mu, C_s^2) =
    \frac{\ln\!\bigl(C(c,\varrho) / 0.01\bigr) \cdot (1 + C_s^2)}
         {2\,(c\,\mu - \lambda_p)}.
  \label{eq:tail-wait}
\end{equation}
In the many-server regime ($c \gg 1$), $C(c,\varrho) \approx 0$
and $W_{99} \approx 0$: slots are almost always available and
queueing delays are negligible.  Fleet sizing is then dominated by
the utilization cap $\rho_{\max}$ (Section~\ref{sec:opt-perpool}).

\subsection{TTFT Decomposition}
\label{sec:model-ttft}

Time-to-First-Token decomposes as
\begin{equation}
  \text{TTFT} = W_{\text{queue}} + T_{\text{prefill}} + T_{\text{first-decode}},
  \label{eq:ttft}
\end{equation}
where $W_{\text{queue}}$ is the queueing delay before a slot is
available, $T_{\text{prefill}} = \lceil L_{\text{in}}/C_{\text{chunk}}
\rceil \cdot t_{\text{iter}}$ is the physical prefill time
(wall-clock, independent of batch size since all $n_{\max}$ slots run
in parallel), and $T_{\text{first-decode}} = t_{\text{iter}}$ is one
decode step.  The SLO constraint is
\begin{equation}
  W_{99}(c, \mu, C_s^2) \;\leq\;
  T_{\text{slo}} - T_{\text{prefill}}^{(99)} - t_{\text{iter}},
  \label{eq:slo-constraint}
\end{equation}
where $T_{\text{prefill}}^{(99)}$ is the P99 prefill time computed
from the pool's request length distribution.

\subsection{Cost Model}
\label{sec:model-cost}

GPU cost per unit time is $c_s$ per short-pool GPU and $c_l$ per
long-pool GPU.  Let $\phi = c_l / c_s$ be the GPU cost ratio.
Total annualized cost for a fleet of $n_s$ and $n_l$ GPUs is
\begin{equation}
  \mathcal{C}(n_s, n_l) = c_s\,n_s + c_l\,n_l.
  \label{eq:cost}
\end{equation}
The \emph{provisioning problem} is
\begin{equation}
  \min_{n_s,\, n_l \in \mathbb{Z}_{>0}}
    \mathcal{C}(n_s, n_l)
  \quad\text{s.t. Eq.~\eqref{eq:slo-constraint} holds for both pools.}
  \label{eq:problem}
\end{equation}

\section{Optimal Fleet Sizing}
\label{sec:optimal}

\subsection{Per-Pool Sizing}
\label{sec:opt-perpool}

Because the two pools are independent M/G/$c$ queues with fixed
arrival rates $\lambda_s$ and $\lambda_l$, problem~\eqref{eq:problem}
separates into two independent minimizations.  For each pool, the
minimum GPU count is
\begin{equation}
  n^* = \min\bigl\{c \in \mathbb{Z}_{>0} :
    W_{99}(c, \mu, C_s^2) \leq T_{\text{slo,eff}}\bigr\},
  \label{eq:nstar}
\end{equation}
where $T_{\text{slo,eff}}$ is the SLO budget after subtracting
P99 prefill time per Eq.~\eqref{eq:slo-constraint}, and we
additionally enforce $n^* \geq \lceil \lambda_p / (\rho_{\max} \mu)
\rceil$ for a utilization cap $\rho_{\max} = 0.85$ to ensure
analytical stability.

\subsection{The Cost Cliff Prevents Achieving $\Bshort^*$}
\label{sec:opt-bshort}

Differentiating total cost $\mathcal{C}$ with respect to $\Bshort$
and setting to zero (treating $n^*$ as a smooth function of
$\lambda_p$) gives the first-order condition for the optimal boundary:
\begin{equation}
  c_s \frac{\partial n_s^*}{\partial \lambda_s}
  = c_l \frac{\partial n_l^*}{\partial \lambda_l}.
  \label{eq:bshort-foc}
\end{equation}
The $\lambda f(\Bshort)$ factor cancels from both sides, leaving the
condition: the marginal cost of routing one additional request to the
short pool must equal the marginal saving of removing one request from
the long pool.

\begin{proposition}[Optimal boundary — equal marginal GPU cost]
\label{prop:bshort}
Under the M/G/$c$ cost model with $\rho_{\max}$-constrained sizing,
the provisioning-optimal $\Bshort^*$ satisfies
Eq.~\eqref{eq:bshort-foc}: equal marginal GPU cost per unit traffic
in both pools.  For a homogeneous fleet ($c_s = c_l$, same GPU type),
this holds when both pools operate at the same utilization level.
\end{proposition}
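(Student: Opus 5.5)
The plan is to treat total cost as a function of the single scalar $\Bshort$ through the arrival rates, and differentiate. Under the routing model of Eqs.~\eqref{eq:lambda-s}--\eqref{eq:lambda-l}, with $\gamma$ and $p_c$ held fixed, we have
\[
\lambda_s(\Bshort) = \lambda\bigl[(1-p_c)F(\Bshort) + p_c F(\gamma\Bshort)\bigr],
\qquad \lambda_l = \lambda - \lambda_s .
\]
So $d\lambda_s/d\Bshort = -\,d\lambda_l/d\Bshort$. Call this common derivative $\lambda \tilde f(\Bshort)$; it reduces to $\lambda f(\Bshort)$ when $p_c=0$, and we assume it is strictly positive.

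Since the problem~\eqref{eq:problem} separates (Section~\ref{sec:opt-perpool}), the optimal cost at a given boundary is
\[
\mathcal{C}(\Bshort) = c_s\,n_s^*(\lambda_s) + c_l\,n_l^*(\lambda_l).
\]
Here we adopt the relaxation already stated in the text: $n^*$ is treated as a smooth (continuous) function of the pool arrival rate. By the chain rule,
\[
\frac{d\mathcal{C}}{d\Bshort} = \lambda\tilde f(\Bshort)\Bigl[c_s\frac{\partial n_s^*}{\partial\lambda_s} - c_l\frac{\partial n_l^*}{\partial\lambda_l}\Bigr].
\]
At an interior minimizer this vanishes. Because $\lambda\tilde f>0$ the prefactor cancels, which yields Eq.~\eqref{eq:bshort-foc}. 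The first claim then reads as stated: one extra unit of traffic costs the same on either side.

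For the homogeneous case we make the marginal costs explicit using the $\rho_{\max}$-constrained sizing. In the many-server regime the paper already argues $W_{99}\approx0$, so the binding constraint in Eq.~\eqref{eq:nstar} is the utilization cap. Relaxed, this gives
\[
n_p^* = \frac{\lambda_p}{\rho_{\max}\,n_{\max}^{(p)}\mu_p}
\quad\Rightarrow\quad
\frac{\partial n_p^*}{\partial\lambda_p} = \frac{1}{\rho_{\max}\,\mu_{\text{gpu}}^{(p)}} .
\]
With $c_s=c_l$, the condition becomes equality of per-GPU marginal throughput, $\partial n_s^*/\partial\lambda_s=\partial n_l^*/\partial\lambda_l$. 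More generally, write $n_p^*=\lambda_p/(u_p\,\mu_{\text{gpu}}^{(p)})$, where $u_p$ is the pool's operating utilization. If $n^*$ is locally linear in $\lambda_p$ with slope fixed by the operating point, equal marginal cost per unit of capacity holds exactly when $u_s=u_l$: both pools sit at the same point on the sizing curve. I would present this as the interpretation of Eq.~\eqref{eq:bshort-foc} under a shared $\rho_{\max}$, after normalising each pool's service rate into capacity units.

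The main obstacle is rigor and nondegeneracy rather than algebra. $n^*$ is integer-valued and piecewise constant, $\mu_p$ itself depends on $\Bshort$ through the pool's length distribution, and a pure linear cap makes $\mathcal{C}$ affine in the traffic split, so an interior optimum may not exist. I would address this in two steps. First, state the result explicitly as a first-order necessary condition for the continuous relaxation, assuming differentiability and an interior optimum. Second, handle the $\Bshort$-dependence of $\mu_p$ with an envelope-style argument: the marginal request at the boundary has length $\approx\Bshort$, so "marginal cost per unit traffic" is interpreted as the cost of serving that boundary request in each pool. With that interpretation, the homogeneous-fleet statement becomes equality of the marginal request's utilization footprint in both pools, i.e.\ equal operating utilization.
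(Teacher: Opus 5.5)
Your derivation of Eq.~\eqref{eq:bshort-foc} is the paper's proof. You write $\mathcal{C}(\Bshort)=c_s n_s^*(\lambda_s)+c_l n_l^*(\lambda_l)$, apply the chain rule, and cancel the positive factor $\lambda f(\Bshort)$. Folding $p_c$ and $\gamma$ into $\tilde f$ is a harmless generalization. Your first homogeneous-case step also matches: with $\partial n_p^*/\partial\lambda_p=1/(\rho_{\max}\mu_{\text{gpu}}^{(p)})$ and $c_s=c_l$, the condition becomes $\mu_{\text{gpu}}^{(s)}=\mu_{\text{gpu}}^{(l)}$. That is exactly where the paper's argument ends ($\mu_s=\mu_l$). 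The paper adds that this is generally unattainable, so $\Bshort^*$ must come from the sweep in Algorithm~\ref{alg:fleetopt}.

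The gap is your next move, from this to equal operating utilization.
\begin{itemize}
\item With $n_p^*=\lambda_p/(u_p\mu_{\text{gpu}}^{(p)})$, the slopes are $1/(u_p\mu_{\text{gpu}}^{(p)})$. So Eq.~\eqref{eq:bshort-foc} with $c_s=c_l$ says $u_s\mu_{\text{gpu}}^{(s)}=u_l\mu_{\text{gpu}}^{(l)}$. This is equivalent to $u_s=u_l$ only if the per-GPU service rates already coincide, which is exactly what the cost cliff denies.
\item ``Normalising each pool's service rate into capacity units'' divides out $\mu_{\text{gpu}}^{(p)}$. It swaps marginal cost per unit of \emph{traffic}, which the FOC equates, for marginal cost per unit of \emph{capacity}. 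The latter is $1/u_p$ in each pool by definition, so this is a relabeling, not a consequence of the FOC.
\item The conclusion is also vacuous: under $\rho_{\max}$-constrained sizing, $u_s=u_l=\rho_{\max}$ at every $\Bshort$, so it cannot characterize $\Bshort^*$.
\item The envelope-style fallback does not rescue the claim. Differentiating $\lambda_s/\mu_{\text{gpu}}^{(s)}=\lambda\int_0^{\Bshort}s_s(x)f(x)\,dx/n_{\max}^{(s)}$ and its long-pool analogue picks out the boundary request's GPU footprint $s_p(\Bshort)/n_{\max}^{(p)}$ in each pool. Equating those is a condition on slot counts, not on utilization. For $p_c=0$ and the paper's $t_{\text{iter}}$ model, it forces $n_{\max}^{(s)}=n_{\max}^{(l)}$, i.e.\ no cliff.
\end{itemize}
The right conclusion is the one your obstacle paragraph already reaches. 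With the linear cap, $\mathcal{C}$ is affine in the split, and stationarity reduces to $c_s/\mu_{\text{gpu}}^{(s)}=c_l/\mu_{\text{gpu}}^{(l)}$. This generically has no interior solution, so the optimum is found by the discrete sweep. Stop there, as the paper's proof does; the paper itself never derives an equal-utilization characterization from the FOC, and you should not try to manufacture one.
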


\noindent The problem: at any $\Bshort$, the borderline band
$(\Bshort, \gamma\Bshort]$ contains 43--76\% of above-threshold
traffic for real workloads.  Without a mechanism to redirect borderline
requests to $\Ps$, the long pool must absorb this traffic at
$\rho = 8$--$42\times$ higher GPU cost per request, requiring
significantly more GPUs than the analytical optimum.
\car{} resolves this by making the analytically-optimal boundary
achievable in practice.

\subsection{Optimal Compression Bandwidth $\gamma^*$}
\label{sec:opt-gamma}

With \car{} active, borderline requests at bandwidth $\gamma$ are
redirected to $\Ps$.  The resulting $\alpha' = \alpha + \beta p_c$
shifts $\lambda_s$ upward and $\lambda_l$ downward.  The first-order
condition for the optimal $\gamma$ is:
\begin{equation}
  c_s \frac{\partial n_s^*}{\partial \lambda_s} \cdot p_c \cdot \lambda f(\gamma\Bshort)
  = c_l \frac{\partial n_l^*}{\partial \lambda_l} \cdot p_c \cdot \lambda f(\gamma\Bshort),
  \label{eq:gamma-foc}
\end{equation}
which again reduces to the equal marginal cost condition
Eq.~\eqref{eq:bshort-foc}.  In practice $\gamma^*$ is found by
discrete sweep over $\gamma \in \{1.0, 1.1, \ldots, 2.0\}$ because
$n^*$ is integer-valued and the long-pool service rate must be
recalibrated for the post-compression distribution at each $\gamma$.
The sweep is fast ($<1$\,ms).  For workloads where most above-threshold
traffic is borderline (Archetype~I/II), $\gamma^*$ tends toward large
values (2.0), as compressing more traffic into $\Ps$ reduces the
expensive long pool.  For Archetype~II with a dispersed above-threshold
distribution, $\gamma^*$ reflects the balance between short-pool
overheads and long-pool savings.

\begin{theorem}[Co-design is never worse than retrofit]
\label{thm:codesign}
Let $\mathcal{C}^{\mathrm{retro}}$ be the cost of a fleet sized for
pool routing at $\gamma = 1$ with \car{} later deployed, and
$\mathcal{C}^{\mathrm{co}}$ be the cost of a fleet co-designed with
\car{} at the same $\gamma$.  Then
$\mathcal{C}^{\mathrm{co}} \leq \mathcal{C}^{\mathrm{retro}}$.
\end{theorem}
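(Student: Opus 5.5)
The plan is a feasibility argument: the retrofit fleet is one feasible point of the co-design problem, and the co-design fleet is by definition the cheapest feasible point. Fix the workload CDF $F$, the boundary $\Bshort$, the bandwidth $\gamma$ and the compressibility rate $p_c$. Once \car{} is deployed, the routed arrival rates are $\lambda_s = (\alpha+\beta p_c)\lambda$ and $\lambda_l = (1-\alpha-\beta p_c)\lambda$ by Eqs.~\eqref{eq:lambda-s}--\eqref{eq:lambda-l}. These rates, together with the post-compression service distributions (and hence $\mu$, $C_s^2$, $T_{\text{prefill}}^{(99)}$), are fully determined by $(F,\Bshort,\gamma,p_c)$. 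They do not depend on which fleet is provisioned.

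The first step is to say precisely what each quantity is.
\begin{itemize}
  \item $\mathcal{C}^{\mathrm{co}}$ is the optimal value of problem~\eqref{eq:problem} posed with these post-\car{} rates and distributions.
  \item $\mathcal{C}^{\mathrm{retro}} = c_s n_s^{\mathrm{r}} + c_l n_l^{\mathrm{r}}$, where $(n_s^{\mathrm{r}}, n_l^{\mathrm{r}})$ solve problem~\eqref{eq:problem} at $\gamma=1$, i.e.\ with $\beta = 0$.
  \item The operational requirement on the retrofit fleet is that, once \car{} is switched on, it must still meet the SLO constraint~\eqref{eq:slo-constraint} in both pools under the post-\car{} traffic. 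Otherwise the comparison is not between two valid deployments.
\end{itemize}

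The second step states the key claim. If the retrofit fleet is SLO-feasible after \car{} is switched on, then $(n_s^{\mathrm{r}}, n_l^{\mathrm{r}})$ is a feasible point of the co-design problem. Since $\mathcal{C}^{\mathrm{co}}$ is the minimum over all feasible points, $\mathcal{C}^{\mathrm{co}} \le \mathcal{C}^{\mathrm{retro}}$. To make the separability concrete, I would invoke the per-pool decomposition of Section~\ref{sec:opt-perpool}. By Eq.~\eqref{eq:nstar} together with the utilization cap, each pool's co-design GPU count $n_p^{*}$ is the smallest integer satisfying that pool's constraint under the post-\car{} load. Any feasible count is at least as large, so $n_s^{*} \le n_s^{\mathrm{r}}$ and $n_l^{*} \le n_l^{\mathrm{r}}$. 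Multiplying by the nonnegative prices $c_s, c_l$ and summing gives the result.

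I should also cover the case where the retrofit fleet is infeasible after \car{}. This can happen because the short pool gains $\beta p_c \lambda$ extra arrivals that it was not sized for. In that case the retrofit operator must add short-pool GPUs until the constraint holds, so the true retrofit cost is some fleet $(n_s', n_l^{\mathrm{r}})$ with $n_s' \ge n_s^{*}$ that is feasible for the post-\car{} problem. The same minimality argument then applies. Alternatively, one can define $\mathcal{C}^{\mathrm{retro}}$ as $+\infty$ when the retrofit fleet is infeasible, which makes this case trivial.

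The main obstacle is not the inequality itself but fixing the definition of $\mathcal{C}^{\mathrm{retro}}$ carefully, since the statement leaves it informal. I would adopt the convention that a retrofit fleet is any fleet that starts from the $\gamma=1$ optimum and is adjusted only as much as needed for SLO compliance after \car{} is deployed, with no decommissioning of existing GPUs. Under that convention the retrofit fleet keeps $n_l^{\mathrm{r}} = n_l^{*}(\gamma{=}1)$. Because the long-pool constraint is monotone in load, and $\lambda_l$ only decreases when \car{} is on, co-design can shed long-pool GPUs that the retrofit fleet retains. So the inequality holds and is typically strict, and the surplus retained long-pool capacity is exactly the gap that Section~\ref{sec:eval} reports.
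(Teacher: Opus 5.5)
Your proof is correct and is essentially the paper's own argument made precise: the paper's sketch says co-design optimizes over a larger feasible set because it may shed long-pool GPUs that the retrofit keeps. That is your observation that the (possibly augmented) SLO-feasible retrofit fleet is a feasible point of the post-\car{} sizing problem, provided both are held to the same constraints (SLO plus the $\rho_{\max}$ cap).

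Two of your side remarks are off, though neither affects the weak inequality. First, the long-pool requirement is not monotone in $\lambda_l$ alone, because $\mu_l$, $C_s^2$ and $T_{\text{prefill}}^{(99)}$ are recalibrated on the harder post-compression distribution, so your augmentation step should allow adding long-pool GPUs, not only short-pool ones. Second, the retrofit-versus-\sys{} gaps in Table~\ref{tab:fleet-comparison} come from $\gamma^*=2.0$ versus $\gamma=1.5$, not from retained long-pool capacity: the retrofit rows already have reduced $n_l$, and the two rows coincide for Agent-heavy.
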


\noindent\textit{Proof sketch.}
The co-designed fleet solves the same minimization problem as the
retrofitted fleet but with the additional freedom to reduce $n_l^*$
knowing that $\lambda_l$ will be lower.  Hence the feasible set is
larger and the minimum cost is weakly lower.  $\square$

\section{Compress-and-Route as Implementation Mechanism}
\label{sec:car}

\car{} is the gateway-layer component that makes Proposition~\ref{prop:bshort}'s
optimal boundary achievable in practice.  Rather than a separate
system, it is the implementation of the optimal fleet boundary derived
in Section~\ref{sec:optimal}.

\subsection{The Virtual Pool}
\label{sec:virtual-pool}

\car{} shifts the effective routing boundary from $\Bshort$ to
$\gamma\Bshort$ by compressing borderline requests at the gateway.
A request $r$ with $\Bshort < \Ltotal \leq \gamma\Bshort$ is
intercepted; its prompt is compressed to a token budget
$T_c = \Bshort - \Lout$ and re-routed to $\Ps$.  From the
engine's perspective, $\Ps$ appears to have a higher effective
$\Cmax$: the virtual pool capacity is $\gamma\Bshort$ without
any hardware change.  The GPU savings gain is
\begin{equation}
  \Delta\alpha = \beta\,p_c, \qquad
  \alpha' = \alpha + \beta\,p_c,
  \label{eq:alpha-prime}
\end{equation}
and additional GPU savings beyond pool routing are
$\Delta\alpha(1 - 1/\rho) = \beta p_c(1 - 1/\rho)$.

\subsection{Extractive Compression Pipeline}
\label{sec:compressor}

The compressor is a pure classical-NLP extractive pipeline requiring no
LLM inference.  Given a borderline prompt $x$ and token budget
$T_c = \Bshort - \Lout$, it:
\begin{enumerate}[nosep]
  \item Splits $x$ into sentences with Unicode-aware heuristics.
  \item Scores each sentence by a composite of TextRank
    ($w=0.20$)~\citep{mihalcea2004textrank}, Position ($w=0.40$),
    TF-IDF ($w=0.35$)~\citep{li2023selective}, and Novelty ($w=0.05$).
  \item Greedily selects sentences in score order, always retaining
    the first 3 and last 2 (primacy/recency invariant).
  \item Stops when the cumulative token count reaches $T_c$.
\end{enumerate}

\paragraph{Hard OOM guarantee.}
The budget $T_c = \Bshort - \Lout$ is set by construction, so
no compressed request can overflow $\Ps$'s KV cache:
\begin{equation}
  T_c + \Lout = \Bshort.
  \label{eq:oom}
\end{equation}

\paragraph{Content-type safety gate.}
Compression is applied only to content categories where structural
extraction is semantically safe: RAG and prose.  Code is excluded.
The category signal reuses the per-request EMA estimate from the
base router at zero additional overhead.

\paragraph{Compression latency.}
Measured on Intel Xeon Platinum 8568Y+, end-to-end compression takes
2--7\,ms per borderline request.  Weighted across all traffic, the
mean overhead is $\leq 0.58$\,ms per request.

\paragraph{Fidelity.}
A fidelity study on 300 LMSYS-Chat-1M prompts from the Agent-heavy
borderline band ($\Bshort = 8{,}192$, $\gamma = 1.5$, band 8K--12K
tokens) gives compressibility $p_c = 1.00$ for prose/RAG content,
BERTScore F1 = 0.884, ROUGE-L recall = 0.856, and TF-IDF cosine
similarity = 0.981 at a mean 15.4\% token reduction
(Appendix~\ref{app:fidelity}).  Code is excluded from compression.

\section{The \sys{} Offline Planner}
\label{sec:algorithm}

Algorithm~\ref{alg:fleetopt} gives the complete \sys{} planner.
It takes as input the workload CDF $F$, arrival rate $\lambda$,
SLO $T_{\text{slo}}$, GPU profile $(W, H, n_{\max}^{(s)},
n_{\max}^{(l)}, \Cmax^{(s)}, \Cmax^{(l)})$, and cost ratio $\phi$.
It returns the optimal $(n_s^*, n_l^*, \Bshort^*, \gamma^*)$.

\begin{algorithm}[htbp]
\caption{\sys{} Offline Planner}
\label{alg:fleetopt}
\begin{algorithmic}[1]
\Require CDF $F$, $\lambda$, $T_{\text{slo}}$, GPU profile, $\phi$,
  candidate threshold set $\mathcal{B}$ (hardware-feasible $\Bshort$ values)
\Ensure $(n_s^*, n_l^*, \Bshort^*, \gamma^*)$

\For{$B \in \mathcal{B}$}  \Comment{\textit{outer sweep over candidate boundaries}}
  \For{$\gamma \in \{1.0, 1.1, \ldots, 2.0\}$}
    \State $\alpha' \leftarrow F(B) + [F(\gamma B) - F(B)] \cdot p_c$
    \State $\lambda_s \leftarrow \alpha'\lambda$;\quad
      $\lambda_l \leftarrow (1-\alpha')\lambda$
    \State Calibrate $\mu_s$, $C_{s,s}^2$ from $F$ restricted to $[1, B]$
      \Comment{\textit{short pool}}
    \State Calibrate $\mu_l$, $C_{s,l}^2$ from $F$ restricted to
      $(\gamma B, \infty)$ \Comment{\textit{post-compression long pool}}
    \State $n_s \leftarrow$ Erlang-C inversion
      (Eq.~\eqref{eq:nstar}) for $(\lambda_s, \mu_s, C_{s,s}^2)$
    \State $n_l \leftarrow$ Erlang-C inversion
      (Eq.~\eqref{eq:nstar}) for $(\lambda_l, \mu_l, C_{s,l}^2)$
    \State $\text{cost}[B, \gamma] \leftarrow c_s n_s + c_l n_l$
  \EndFor
\EndFor
\State $(B^*, \gamma^*) \leftarrow \arg\min_{B \in \mathcal{B},\, \gamma} \text{cost}[B, \gamma]$
\Return $(n_s^*[B^*, \gamma^*],\, n_l^*[B^*, \gamma^*],\, B^*,\, \gamma^*)$
\end{algorithmic}
\end{algorithm}

\noindent\textbf{Candidate set $\mathcal{B}$.}
In practice $\Bshort$ is hardware-constrained: it determines $n_{\max}^{(s)} =
n_{\max}^{\mathrm{calib}} \times C_{\text{calib}} / \Bshort$, which must be a
positive integer.  The search is therefore over the finite set of CDF
breakpoints that yield valid $n_{\max}^{(s)}$ values (typically 5--15 candidates
per workload).  The total sweep --- all $B \in \mathcal{B}$ times all $\gamma$
--- completes in under 1\,ms.  Proposition~\ref{prop:bshort} provides the FOC
that characterises $\Bshort^*$ analytically; the sweep finds the integer-optimal
solution consistent with hardware granularity.

\paragraph{Critical: $\mu_l$ recalibration.}
Step~6 recalibrates the long-pool service rate from the
\emph{post-compression} request distribution (requests with
$\Ltotal > \gamma\Bshort^*$), not the full above-threshold
distribution.  Compressing borderline requests out of the long pool
hardens the remaining distribution: longer mean token length,
lower $\mu_l$.  Skipping this recalibration systematically
overestimates the savings from larger $\gamma$.  With correct
recalibration, the planner finds $\gamma^* = 2.0$ for Archetype~I/II
workloads (Azure, LMSYS) where the long pool shrinks substantially,
and $\gamma^* = 1.5$ for the dispersed Agent-heavy workload where
gains plateau.

\paragraph{Erlang-C inversion.}
The minimum server count $c$ satisfying Eq.~\eqref{eq:slo-constraint}
is found by binary search over the interval
$[\lceil a/\rho_{\max}\rceil,\; 10\lceil a \rceil]$
where $a = \lambda_p / \mu_{\text{gpu}}$,
with utilization cap $\rho_{\max} = 0.85$.

\section{Evaluation}
\label{sec:eval}

\subsection{Setup}
\label{sec:eval-setup}

\paragraph{Workload traces.}
We evaluate on two public traces and one synthetic trace built from published workload statistics.

\textbf{Azure LLM Inference Trace 2023}~\citep{patel2024splitwise}
contains 28,185 requests (8,819 coding, 19,366 conversational).
Mean $\Ltotal = 1{,}588$ tokens; p90\,=\,4,242; p99\,=\,7,445.
We use $\Bshort = 4{,}096$ where $\alpha = 0.898$ and
$\beta = 0.078$, giving a 16$\times$ cliff and a meaningful
two-pool split (Archetype~I/II).

\textbf{LMSYS-Chat-1M (multi-turn)}~\citep{zheng2024lmsyschat1m}
uses accumulated context at each turn.  We use $\Bshort = 1{,}536$
where $\alpha = 0.909$ and $\beta = 0.046$, giving a 42$\times$
cliff (Archetype~I/II).

\textbf{Agent-heavy} is a synthetic trace derived from published
statistics for SWE-bench~\citep{jimenez2024swe} (40\%), BFCL~\citep{yan2024bfcl}
(25\%), and RAG pipelines~\citep{lewis2020rag} (35\%).
Mean $\Ltotal = 6{,}511$ tokens; p50\,=\,4{,}096; p90\,=\,16{,}384; p99\,=\,32{,}768.
At $\Bshort = 8{,}192$, $\alpha = 0.740$, $\beta = 0.112$
(Archetype~II, the primary scenario for \car{} benefit).
The SWE-bench component consists largely of code; code requests are excluded
from compression (Section~\ref{sec:compressor}).  The effective compression
success rate for Agent-heavy borderline traffic is $p_c = 0.75$, reflecting
that approximately 25\% of borderline requests are code-category and cannot
be compressed.

\paragraph{Simulation parameters.}
$W = 8$\,ms, $H = 0.65$\,ms/slot, $C_{\text{chunk}} = 512$,
calibrated to Llama-3-70B on A100-80GB (8-GPU tensor-parallel node).
The long pool is sized for $\Cmax^{(l)} = 65{,}536$ tokens,
giving $n_{\max}^{(l)} = 16$ concurrent slots per GPU and
KV memory $\approx 20.0$\,GB per slot (320\,KB/token).
Short-pool $n_{\max}^{(s)}$ depends on $\Bshort$: 256 at 4K, 682
at 1.5K, 128 at 8K.  GPU cost \$2.21/GPU-hr.
Fleet arrival rate $\lambda = 1{,}000$\,req/s; P99 TTFT target
$T_{\text{slo}} = 500$\,ms unless stated otherwise.
DES validation uses \texttt{inference-fleet-sim}~\citep{chen2026fleetsim}.

\paragraph{Baselines.}
\begin{enumerate}[nosep]
  \item \textbf{Homogeneous}: single pool sized for 64K context.
  \item \textbf{Pool routing (PR)}: two pools at workload-specific
    $\Bshort$ (Table~\ref{tab:borderline-fractions}), $\gamma = 1.0$
    (no compression).
  \item \textbf{PR + \car{} (retrofit)}: \car{} at $\gamma = 1.5$
    deployed on the PR fleet, co-sized for the compressed $\lambda_s$.
  \item \textbf{\sys{} (co-design)}: fleet co-designed by
    Algorithm~\ref{alg:fleetopt} at optimal $\gamma^*$.
\end{enumerate}

\subsection{Fleet GPU Savings vs.\ Homogeneous}
\label{sec:eval-fleet}

Table~\ref{tab:fleet-comparison} shows fleet GPU counts and GPU savings
versus the homogeneous baseline.  The \sys{} fleet is the outcome
of the complete framework: optimal analytical sizing plus \car{}.

\begin{table}[htbp]
\centering
\caption{Fleet GPU counts and annualized cost at
  $\lambda = 1{,}000$\,req/s, $\rho_{\max} = 0.85$,
  homogeneous A100-80GB sized for 64K context.
  Cost at \$2.21/GPU-hr $\times$ 8,760\,hr/yr.
  $\gamma^*$ from Algorithm~\ref{alg:fleetopt}.
  $p_c = 1.0$ for Azure and LMSYS (prose/RAG borderline traffic);
  $p_c = 0.75$ for Agent-heavy (25\% of borderline traffic is code,
  excluded from compression).
  See Table~\ref{tab:borderline-fractions} for $\Bshort$, $\alpha$, $\beta$.}
\label{tab:fleet-comparison}
\small
\begin{tabular}{llrrrrc}
\toprule
Workload & Method & $n_s$ & $n_l$ & Total & Ann.\ cost (K\$) & Savings \\
\midrule
\multirow{4}{*}{\makecell{Azure \\ ($\Bshort=4{,}096$)}}
  & Homogeneous           & --- &  --- &   284 & 5{,}498 & --- \\
  & Pool routing (PR)     &  43 &  131 &   174 & 3{,}369 & 38.7\% \\
  & PR + \car{} ($\gamma=1.5$) & 47 &  45 &    92 & 1{,}781 & 67.6\% \\
  & \sys{} ($\gamma^*=2.0$) &  48 &    2 &    50 &    968 & \textbf{82.4\%} \\
\midrule
\multirow{4}{*}{\makecell{LMSYS \\ ($\Bshort=1{,}536$)}}
  & Homogeneous           & --- &  --- &   139 & 2{,}691 & --- \\
  & Pool routing (PR)     &   7 &   74 &    81 & 1{,}568 & 41.7\% \\
  & PR + \car{} ($\gamma=1.5$) &  7 &  65 &    72 & 1{,}394 & 48.2\% \\
  & \sys{} ($\gamma^*=2.0$) &   7 &   52 &    59 & 1{,}142 & \textbf{57.6\%} \\
\midrule
\multirow{4}{*}{\makecell{Agent-heavy \\ ($\Bshort=8{,}192$)}}
  & Homogeneous           & --- &  ---  & 2{,}397 & 46{,}405 & --- \\
  & Pool routing (PR)     & 229 & 2{,}037 & 2{,}266 & 43{,}869 & 5.5\% \\
  & PR + \car{} ($\gamma=1.5$) & 255 & 1{,}981 & 2{,}236 & 43{,}288 & 6.7\% \\
  & \sys{} ($\gamma^*=1.5$) & 255 & 1{,}981 & 2{,}236 & 43{,}288 & \textbf{6.7\%} \\
\bottomrule
\end{tabular}
\end{table}

\paragraph{Savings decomposition.}
The numbers in Table~\ref{tab:fleet-comparison} follow a clear pattern.
Pool routing alone saves 5.5--41.7\%, with the biggest gains when
$\alpha$ is high and the cliff ratio $\rho$ is large.  \car{} with
$\gamma = 1.5$ adds 29 percentage points for Azure (because
$\beta = 7.8\%$ moves 76\% of above-threshold traffic into the
efficient short pool, and $\rho = 16\times$ makes each redirected
request highly valuable) but only 1.2\,pp for Agent-heavy (because
26\% of all traffic remains above $\gamma\Bshort$, requiring a large
long pool regardless).  FleetOpt at $\gamma^* = 2.0$ for Azure nearly
eliminates the long pool (2 GPUs), achieving 82.4\% savings.

\paragraph{When does \car{} add value?}
The incremental GPU saving from adding \car{} to pool routing is
$\Delta\alpha(1 - 1/\rho) = \beta p_c (1 - n_{\max}^{(l)}/n_{\max}^{(s)})$.
This is large when the borderline fraction $\beta$ is significant
\emph{and} the cliff ratio $\rho$ is large.  For Agent-heavy,
$\beta = 11.2\%$ is moderate but $\rho = 8\times$ is the smallest
cliff, and $p_c = 0.75$ (code exclusion), limiting the per-request gain.
For Azure, $\beta = 7.8\%$ paired with $\rho = 16\times$ yields large
incremental gains.

\paragraph{Why co-design equals retrofit for Agent-heavy.}
Table~\ref{tab:fleet-comparison} shows that for Agent-heavy, the PR+\car{}
retrofit at $\gamma = 1.5$ and \sys{} at $\gamma^* = 1.5$ produce
\emph{identical} fleets.  This is expected: when the planner's optimal $\gamma^*$
matches the retrofit's $\gamma$, the two approaches arrive at the same
$(n_s, n_l)$ by construction.  Theorem~\ref{thm:codesign} is satisfied
(co-design $\leq$ retrofit), but the inequality is tight.  The co-design
advantage is largest when $\gamma^*$ exceeds the retrofit's $\gamma$, which
occurs when most above-threshold traffic is borderline (Archetype~I/II,
Azure and LMSYS).  For Agent-heavy (Archetype~II with dispersed above-threshold
traffic), $\gamma^* = 1.5$ is already the practical limit: compressing further
leaves a harder long-pool distribution and yields no net saving.

\subsection{Compression Pipeline Latency}
\label{sec:eval-latency}

The compressor is applied only to borderline requests, so the
mean overhead per request is $\beta \times t_{\text{compress}}$.
Table~\ref{tab:latency} shows the latency profile.

\begin{table}[htbp]
\centering
\caption{End-to-end compressor latency (ms), measured on Intel Xeon
  Platinum 8568Y+ single core.  ``Overhead/req'' is the mean added
  latency across \emph{all} requests, weighted by $\beta$.}
\label{tab:latency}
\small
\begin{tabular}{lrrrrrr}
\toprule
Workload & $\Bshort$ & $\beta$ & p50 & p95 & p99 & Overhead/req \\
\midrule
Azure       & 4{,}096 & 0.078 & 1.8\,ms & 4.2\,ms & 6.5\,ms & $<$0.2\,ms \\
LMSYS       & 1{,}536 & 0.046 & 1.2\,ms & 3.1\,ms & 5.2\,ms & $<$0.1\,ms \\
Agent-heavy & 8{,}192 & 0.112 & 3.4\,ms & 6.2\,ms & 7.8\,ms & 0.39\,ms \\
\bottomrule
\end{tabular}
\end{table}

Even in the worst case (agent-heavy), the 0.39\,ms average overhead
is invisible against a 500\,ms TTFT budget.

\subsection{Analytical Model Validation}
\label{sec:eval-validation}

We validate the \sys{} analytical model against
\texttt{inference-fleet-sim}~\citep{chen2026fleetsim}, an open-source
discrete-event simulator for heterogeneous LLM GPU fleets.
The simulator drives Poisson arrivals from the empirical CDF with
$\lambda = 1{,}000$\,req/s and records the fraction of slot-time
that KV-cache slots are busy (GPU utilization $\hat\rho$).

Table~\ref{tab:des-validation} compares the analytical utilization
$\rho_{\text{ana}} = \lambda_p / (n \cdot \mu_{\text{gpu}})$ against
the DES-measured $\hat\rho$ for the pool-routing ($\gamma=1$) fleet
from Table~\ref{tab:fleet-comparison}.

\begin{table}[htbp]
\centering
\caption{Analytical vs.\ DES GPU utilization $\rho$ at
  $\lambda = 1{,}000$\,req/s, pool-routing fleet ($\gamma = 1$).
  ``Error'' $= (\rho_{\text{ana}} - \hat\rho) / \hat\rho$.
  DES uses 30{,}000 requests per pool.}
\label{tab:des-validation}
\small
\begin{tabular}{llrrrr}
\toprule
Workload & Pool & $n$ GPUs & $\rho_{\text{ana}}$ & $\hat\rho$ (DES) & Error \\
\midrule
\multirow{2}{*}{Azure}
  & Short &  43 & 0.848 & 0.865 & $-2.1\%$ \\
  & Long  & 131 & 0.845 & 0.847 & $-0.1\%$ \\
\midrule
\multirow{2}{*}{LMSYS}
  & Short &   7 & 0.771 & 0.792 & $-2.7\%$ \\
  & Long  &  74 & 0.845 & 0.853 & $-1.0\%$ \\
\midrule
\multirow{2}{*}{Agent-heavy}
  & Short &   229 & 0.848 & 0.868 & $-2.2\%$ \\
  & Long  & 2{,}037 & 0.850 & 0.850 & $-0.1\%$ \\
\bottomrule
\end{tabular}
\end{table}

The analytical model predicts GPU utilization within 3\% of the DES
across all pools and workloads.  Both analytical and DES values sit near $\rho_{\max} = 0.85$,
confirming that the $\rho_{\max}$-constrained sizing hits its target.
The analytical model is slightly optimistic: actual utilization runs
0.1--2.7\% above prediction in all cases.  In practice this means the
provisioned fleet is just barely busier than expected; adding 1--3
GPUs per pool eliminates even this small gap.

These fleets operate in the many-server regime: total KV slots
$c = n \cdot n_{\max}$ ranges from 112 to 32{,}592 across our
configurations.  At that scale the Erlang-C probability
$C(c,\varrho) \ll 1$, slot-wait times are negligible, and fleet sizing
is dominated entirely by the utilization cap $\rho_{\max}$.  In this regime
the full M/G/$c$ + Kimura apparatus reduces to the simpler bound
$n^* \approx \lceil\lambda_p / (\rho_{\max}\mu)\rceil$; the queueing
machinery becomes load-bearing only for smaller, lightly-provisioned
fleets (few GPUs, high $\varrho$).

\textbf{P99 TTFT.}  Because queue waits are negligible, TTFT is
dominated by prefill time.  Analytical P99 TTFT for the PR fleet:
Azure short 20\,ms / long 80\,ms; LMSYS short 11\,ms / long 48\,ms;
Agent-heavy short 47\,ms / long 220\,ms --- all comfortably within the
500\,ms SLO.  The \sys{} planner enforces the SLO constraint
(Eq.~\eqref{eq:slo-constraint}) directly; in the many-server regime
this constraint is non-binding and the 500\,ms SLO is met with large
margin across all configurations.

\subsection{Arrival-Rate Sensitivity}
\label{sec:eval-scale}

Table~\ref{tab:slo-sensitivity} shows how fleet sizes scale with
arrival rate for the Agent-heavy workload.  The savings from
pool routing (5.4--5.5\%) and FleetOpt (6.2--6.8\%) are stable
across a 20$\times$ range of arrival rates, confirming that
the proportional GPU savings from the two-pool architecture
scale linearly with load.

\begin{table}[htbp]
\centering
\caption{Fleet size and savings vs.\ arrival rate $\lambda$
  (Agent-heavy, $\Bshort = 8{,}192$, $\gamma^*$ from
  Algorithm~\ref{alg:fleetopt}).}
\label{tab:slo-sensitivity}
\small
\begin{tabular}{rrrrrr}
\toprule
$\lambda$ (req/s) & Homo & PR & \sys{} ($\gamma^*$) & PR saving & \sys{} saving \\
\midrule
  100 &   240 &   227 &   225 ($\gamma^*=1.2$) & 5.4\% & 6.2\% \\
  200 &   480 &   454 &   448 ($\gamma^*=1.5$) & 5.4\% & 6.7\% \\
  500 & 1{,}199 & 1{,}134 & 1{,}119 ($\gamma^*=1.5$) & 5.4\% & 6.7\% \\
1{,}000 & 2{,}397 & 2{,}266 & 2{,}236 ($\gamma^*=1.5$) & 5.5\% & 6.7\% \\
2{,}000 & 4{,}794 & 4{,}531 & 4{,}470 ($\gamma^*=1.5$) & 5.5\% & 6.8\% \\
\bottomrule
\end{tabular}
\end{table}

\section{Related Work}
\label{sec:related}

\paragraph{Pool routing and length-specialized serving.}
\citet{chen2026poolrouting} establishes pool routing and the
GPU savings formula $\alpha(1-1/\rho)$.  \citet{yuan2025cascadeinfer}
independently validates length-specialized partitioning.
\citet{laps2026} and \citet{agrawal2024sarathi} address length
heterogeneity at the engine layer; this paper addresses the upstream
provisioning decision.

\paragraph{Prompt compression as routing lever.}
\citet{chen2026car} first uses prompt compression as an operational
fleet routing lever, studying \car{} as a retrofit on a fixed fleet.
Here we treat \car{} as the \emph{implementation mechanism}
for the analytically-derived optimal fleet boundary.  The two are
complementary: the analytical model prescribes $\Bshort^*$, and
\car{} makes it achievable without hardware changes.

\paragraph{LLM capacity planning.}
\citet{romero2021infaas} and \citet{gujarati2020serving} study
model-variant selection and predictive scaling, respectively.
\citet{patel2024splitwise} analyzes prefill/decode disaggregation
but does not optimize fleet size jointly with routing.
\sys{} is the first system to jointly derive the optimal pool
boundary and GPU counts from a workload CDF.

\paragraph{Queueing models for serving.}
The M/G/$c$ queue and Erlang-C formula are
classical~\citep{harchol2013performance}.
\citet{shahrad2020serverless} applies queueing to serverless
cold-start; \citet{li2023alpaserve} to model placement.
\sys{} applies M/G/$c$ to the pool routing + compression joint
provisioning problem, with the Kimura~\citep{kimura1994mgc}
correction for service-time variance.

\paragraph{Fleet-level LLM simulation.}
Vidur~\citep{vidur2024} and similar tools simulate single inference
engine instances.  \texttt{inference-fleet-sim}~\citep{chen2026fleetsim}
provides fleet-level DES for heterogeneous multi-pool deployments
and is used here to validate the \sys{} analytical model.

\section{Conclusion}
\label{sec:conclusion}

Provisioning LLM GPU fleets for worst-case context lengths is expensive
and largely avoidable.  \sys{} tackles this by starting from first
principles: derive the minimum-cost fleet analytically, then build it.

The analytical core (M/G/$c$ queuing model with $c = n \cdot n_{\max}$
total KV slots, Erlang-C inversion) shows that the optimal fleet is a
two-pool architecture whose boundary $\Bshort^*$ satisfies an equal
marginal GPU cost condition.  The fundamental barrier to achieving
$\Bshort^*$ in practice is the \costcliff{}: borderline requests at
the optimal boundary pay $\rho = 8$--$42\times$ the throughput cost of
requests just below it.  Compress-and-Route (\car{}) resolves this
barrier by compressing borderline prompts at the gateway, converting
a hardware constraint into a software parameter.

The combined framework saves 6--82\% in GPU cost versus homogeneous
deployment, depending on the workload archetype and cliff ratio.
Pool routing alone saves 5.5--41.7\%; \car{} adds 1.2--43.7
percentage points beyond pool routing.  The largest gains arise when
$\beta$ is significant \emph{and} the cliff ratio $\rho$ is large:
Azure at $\Bshort = 4{,}096$ achieves 82.4\% savings with
$\rho = 16\times$ and $\gamma^* = 2.0$.  For Agent-heavy at
$\Bshort = 8{,}192$ with $\rho = 8\times$, gains are more modest
(6.7\%) because 26\% of traffic remains above $\gamma\Bshort$.

How much \car{} co-design adds beyond pool routing depends on three
factors: the cliff ratio $\rho$, the borderline fraction $\beta$,
and compressibility $p_c$.  None of these can be read off a GPU spec
sheet; they require a calibrated workload CDF and the recalibration of
$\mu_l$ for the post-compression distribution.
\texttt{inference-fleet-sim}~\citep{chen2026fleetsim} validates this
within 3\% utilization error.  Skipping the recalibration leads to
over-optimistic savings estimates and under-provisioned fleets.

The \sys{} planner outputs the optimal $(n_s^*, n_l^*, \Bshort^*,
\gamma^*)$ in under 1\,ms, making it practical to re-run whenever
the workload CDF shifts.

\bibliographystyle{plainnat}
\bibliography{refs}

\appendix

\section{Erlang-C Inversion: Algorithm and Numerical Stability}
\label{app:erlang}

The minimum feasible GPU count $n^*$ is found by binary search over
$c$ using the numerically stable recursive Erlang-C form:
\begin{equation}
  C(c, \varrho) = \frac{1}{1 + (1-\varrho) \cdot c! \cdot
    \sum_{k=0}^{c-1} \frac{(c\varrho)^{k-c}}{k!}},
  \label{eq:erlang-c-stable}
\end{equation}
computed in log-space to avoid overflow at large $c$.  The utilization
cap $\rho_{\max} = 0.85$ ensures we always search above the point where
the Kimura approximation loses accuracy and the Erlang-C formula
diverges near $\varrho \to 1$.

\section{Proof of Proposition~\ref{prop:bshort}}
\label{app:bshort-proof}

Total cost as a function of $\Bshort$:
$\mathcal{C}(\Bshort) = c_s n_s^*(F(\Bshort)\lambda) +
c_l n_l^*((1-F(\Bshort))\lambda)$.
Differentiating with respect to $\Bshort$ and applying the chain rule:
\begin{equation}
  \frac{d\mathcal{C}}{d\Bshort} = \lambda f(\Bshort)
  \left[c_s \frac{\partial n_s^*}{\partial \lambda_s}
       - c_l \frac{\partial n_l^*}{\partial \lambda_l}\right].
\end{equation}
Since $\lambda f(\Bshort) > 0$ for any interior boundary, the FOC
$d\mathcal{C}/d\Bshort = 0$ requires the bracketed term to be zero:
\begin{equation}
  c_s \frac{\partial n_s^*}{\partial \lambda_s}
  = c_l \frac{\partial n_l^*}{\partial \lambda_l}.
  \label{eq:bshort-foc-app}
\end{equation}
This is an equal marginal GPU cost condition: the marginal GPU cost of
routing one additional request-per-second to the short pool equals the
marginal GPU saving from removing one request-per-second from the long pool.

For a homogeneous fleet ($c_s = c_l$, same GPU type), the condition
simplifies to $\partial n_s^*/\partial\lambda_s =
\partial n_l^*/\partial\lambda_l$.  Under the $\rho_{\max}$-constrained
sizing regime (which dominates in practice as shown in
Section~\ref{sec:eval-validation}):
$n^* \approx \lceil \lambda_p / (\rho_{\max} \mu) \rceil$, so
$\partial n^*/\partial\lambda_p \approx 1/(\rho_{\max} \mu)$.
The FOC then requires $\mu_s = \mu_l$, i.e., both pools have the same
service rate.  This is generally not achievable for a given $\Bshort$
because short and long pools serve different request length distributions;
in practice $\Bshort^*$ is found numerically by the sweep in
Algorithm~\ref{alg:fleetopt}.  $\square$

\section{Compression Fidelity}
\label{app:fidelity}

Table~\ref{tab:fidelity} summarizes fidelity measurements on 300
borderline prompts drawn from LMSYS-Chat-1M at the Agent-heavy
configuration ($\Bshort = 8{,}192$, $\gamma = 1.5$, borderline band
$8{,}192$--$12{,}288$ tokens).  For the Azure and LMSYS evaluation
configurations, which use smaller $\Bshort$ values (4{,}096 and 1{,}536
respectively), the borderline band is narrower and prompts are shorter,
so fidelity is at least as good as reported here.
All measurements used the compressor described in Section~\ref{sec:compressor}.

\begin{table}[htbp]
\centering
\caption{Compression fidelity on 300 LMSYS-Chat-1M borderline prompts.
  $p_c$: fraction successfully compressed within budget;
  BERTScore F1: semantic similarity (RoBERTa-large);
  ROUGE-L R: longest-common-subsequence recall;
  TF-IDF cos: token-overlap cosine similarity.}
\label{tab:fidelity}
\small
\begin{tabular}{lrrrr}
\toprule
Metric & Mean & p10 & p50 & p90 \\
\midrule
$p_c$ (compressibility) & 1.00 & --- & --- & --- \\
BERTScore F1 & 0.884 & 0.831 & 0.891 & 0.934 \\
ROUGE-L recall & 0.856 & 0.783 & 0.861 & 0.921 \\
TF-IDF cosine & 0.981 & 0.963 & 0.984 & 0.996 \\
Mean token reduction & 15.4\% & 6.1\% & 14.2\% & 26.3\% \\
\bottomrule
\end{tabular}
\end{table}

\end{document}